\documentclass[letterpaper, 11 pt, conference]{ieeeconf}
\IEEEoverridecommandlockouts
\usepackage{cite}
\usepackage{amsmath,amssymb,amsfonts}
\usepackage{algorithm}
\usepackage{algorithmicx}
\usepackage{algpseudocode}
\usepackage{graphicx}
\usepackage{textcomp}
\usepackage{xcolor}
\usepackage{balance}
\usepackage{multirow}

\newtheorem{definition}{Definition}

\newtheorem{proposition}{Proposition}

\newtheorem{remark}{Remark}
\newtheorem{problem}{Problem}

\newcommand{\Xmu}{\mathcal{X}^\mu}
\newcommand{\Zeps}{Z_\epsilon}
\newcommand{\R}{\mathbb{R}}
\newcommand{\IB}{\mathbb{IB}}

\title{\LARGE \bf
Fast Constraint Extraction for Corrective Control under STL Specifications via Logical Dependency Tracking
}

\author{Antoine Besset$^{a}$, Joris Tillet$^{b}$, Chuchu Fan$^{c}$ and Julien Alexandre dit Sandretto$^{a}$
\thanks{$^{a}$U2IS, ENSTA, Institut Polytechnique de Paris, France; $^{b}$ Univ. Grenoble Alpes, CNRS, Grenoble INP, VERIMAG, France; $^{c}$ REALM, AeroAstro, Massachusetts Institute of Technology, Cambridge, USA}}

\begin{document}
\maketitle
\thispagestyle{empty}
\pagestyle{empty}

\begin{abstract}
Ensuring the satisfaction of Signal Temporal Logic (STL) specifications under uncertainty is challenging, as reachability-based monitoring provides guarantees but does not indicate how to restore satisfaction when it becomes indeterminate. A key difficulty is identifying which uncertain components actually affect global satisfaction, especially for nested formulas. This paper introduces a logical dependency tracking framework that propagates uncertainty through the STL structure and captures the causal contribution of reachable sets to satisfaction. By associating markers to uncertain predicates and propagating them via three-valued semantics, we extract in milliseconds a compact Disjunctive Normal Form (DNF) of sufficient constraints, avoiding combinatorial enumeration. As an application, we formulate control correction as a minimum-effort optimization problem. Using zonotopic reachability, the derived constraints are enforced via linear programming, yielding corrections that guarantee STL satisfaction under bounded uncertainty and provide certified probabilistic bounds in the stochastic case. We demonstrate the approach on a nonlinear system with nested STL specifications, showing that dependency tracking enables efficient and formally guaranteed correction. The tracking implementation is available at https://github.com/Antoine-Bst/STL-Three-Valued-Clause-Filtering/.
\end{abstract}

\section{Introduction}\label{sec:intro}
Ensuring the reliability of cyber-physical systems under uncertainty is a central concern in domains such as autonomous systems, robotics, and industrial automation. In such settings, model inaccuracies and external disturbances lead to deviations from nominal behavior, requiring methods that provide formal guarantees despite uncertainty. Reachability analysis~\cite{alexandre_dit_sandretto_validated_2016, Althoff2015ARCH, capdarticle} enables sound over-approximations of all possible trajectories, while Signal Temporal Logic (STL)~\cite{maler_monitoring_2004} provides a rich formal language to specify temporal and state-dependent requirements. Their combination enables the verification of temporal properties over continuous-time nonlinear systems with formal guarantees~\cite{Lercher2024CAV, Roehm2016ATVA, besset2025cdc}.

However, verification alone is not sufficient in safety-critical settings: when uncertainty leads to indeterminate specifications, the system must be corrected. While recent works provide guaranteed monitoring of STL specifications under uncertainty~\cite{Lercher2024CAV, besset2025cdc} or probabilistic assessment~\cite{Tran2025ProbStarTL}, they do not address how to compute corrective actions restoring satisfaction. Temporal implicants\cite{ferrereimplicant} explain a verdict on a trace, whereas under uncertainty satisfaction depends on a complex relation between temporal and logical operators, and indeterminacy typically arises when reachable sets intersect predicate boundaries. Crucially, not all such local ambiguities contribute equally to the global satisfaction.

In this paper, we assume the availability of a nominal trajectory already satisfying the specification, which must provide sufficient control margin for correction. Existing methods can compute such a trajectory, for instance via differentiable planning~\cite{dawson2022}, learning-based methods~\cite{meng2025telograf}, stochastic/tube-based optimization~\cite{sadigh2016safe, Farahani2018, safeautonomy2018, lindemann2021reactive, Yuonline2024}, robust control synthesis~\cite{verhagen2024}, or multi-agent synthesis~\cite{vlahakis2024probabilistic}. These methods compute a nominal trajectory in seconds to hundreds of seconds~\cite{Farahani2018, vlahakis2024probabilistic, lindemann2021reactive}, typically relying on discrete-time linear models~\cite{Farahani2018, safeautonomy2018, vlahakis2024probabilistic,Yuonline2024} or distributional assumptions on uncertainty~\cite{sadigh2016safe}. Planning such a trajectory under STL specifications with formal guarantees is intractable for continuous-time nonlinear systems: no closed-form solution is generally available for stochastic differential equations with nonlinear or hybrid dynamics. Moreover, on long horizons, reachability methods~\cite{alexandre_dit_sandretto_validated_2016, Althoff2015ARCH, capdarticle} accumulate over-approximation, which can render the resulting enclosures unusable for planning.
 
Rather than full replanning, we identify corrections online on top of a nominal controller, acting as a guaranteed correction layer between the high-level planner and the low-level feedback. The core contribution of this work is a logical dependency tracking framework that identifies which reachable sets, among those crossing predicate boundaries, are actually responsible for formula-level indeterminacy. Instead of relying on a direct All-SAT formulation, which is intractable due to combinatorial enumeration, we exploit the structure of three-valued satisfaction propagation to extract these dependencies, yielding a compact Disjunctive Normal Form of sufficient correction clauses (Section~\ref{sec:Results}). As an application of this formulation, we instantiate a control correction using zonotopic control parameterization, which interfaces naturally with existing set-based monitoring methods~\cite{Lercher2024CAV, besset2025cdc, shurmannzonocontrol2021, SCHURMANN2017}. Here, we target safety-critical systems, which motivates the use of a fully guaranteed approach: continuous-time semantics, nonlinear dynamics, and continuous-time validated enclosures.

\textbf{Related works.}
Several approaches exist for monitoring uncertain systems against STL specifications with formal guarantees, but none combines guaranteed satisfaction assessment with corrective control. Set-based methods evaluate STL formulas over reachable tubes using three-valued~\cite{Ishii2016IEICE, besset2025cdc} or four-valued~\cite{Lercher2024CAV} semantics, where indeterminate satisfaction arises when a reachable set intersects a predicate boundary. These methods provide sound verdicts but are limited to bounded disturbances. On the stochastic side, Monte-Carlo methods~\cite{SankarMonteCarlo} and Conformal prediction~\cite{lindemann2023conformal} estimate satisfaction probabilities but offer only statistical guarantees and may miss rare events. ProbStarTL~\cite{Tran2025ProbStarTL} tracks distributional dependencies through probabilistic zonotopes. For unbounded uncertainties such as Gaussian noise, we adopt a Potential Cloud approach~\cite{alexandre_dit_sandretto_confidence_2021} close to ProbReach~\cite{Shmarov2015ProbReach}, yielding a certified lower bound.
However, none of these approaches provides a mechanism to identify which uncertain components must be corrected to restore satisfaction.

On the control side,~\cite{shurmannzonocontrol2021, SCHURMANN2017} showed that zonotopic generators can parameterize guaranteed control corrections for constrained nonlinear systems, with the effect of each correction explicitly tracked through reachability. Funnel-based approaches~\cite{majumdar2017funnel} offer an alternative for controlling all states in tubes under disturbances. Since zonotopic reachability is already required for online monitoring, we build on~\cite{shurmannzonocontrol2021, SCHURMANN2017} and extend it to STL. The constraint identification method we develop is however independent of the control synthesis and applies to any approach capable of computing corrections under time-varying constraints.

\textbf{Contributions.} \textbf{(i)}~Extending prior work~\cite{besset2025cdc}, which identifies a conservative set of uncertain markers but not their logical dependencies, we extract these dependencies during signal propagation, yielding in milliseconds a compact Disjunctive Normal Form (DNF) of sufficient clauses without combinatorial enumeration, and scaling to hundreds of markers. This approach is compared against an All-SAT and single enumeration evaluation strategy. \textbf{(ii)} We propose a guaranteed control correction formulation based on zonotopic reachable tubes~\cite{shurmannzonocontrol2021, SCHURMANN2017}, 
where a single LP per clause yields a correction guaranteeing formula satisfaction.

\section{Preliminaries}\label{sec:prelim}

\label{sec:Preliminaries}
\subsection{Reachability analysis}
\label{sec:Reachability Analysis}
Let us consider a continuous dynamical system modeled by a differential equation of the form:
\begin{equation}
\dot{y}(t) = f(y(t), p), \quad y(t) \in \mathbb{R}^n,\ p \in \mathcal{P},
\end{equation}
where $y(t)$ denotes the system state, $p$ is a vector of parameters of the system, $\mathcal{P} \subset \mathbb{R}^w$ is a compact set. $f$ is a nonlinear function, such that for any initial state $y_0 \in \mathbb{R}^n$ and any
fixed parameters $p$, the system admits a unique trajectory in $\mathbb{R}^n$ denoted by $y(\cdot, y_0, p)$~\cite{alexandre_dit_sandretto_validated_2016,Althoff2015ARCH}.

In the process of solving an Initial Value Problem for a differential equation using validated numerical integration methods~\cite{alexandre_dit_sandretto_validated_2016}, an enclosure \([\tilde{y}_j]\) of the solution is computed for each time interval \([t_j, t_{j+1}]\), where \(t_{j+1} = t_j + h_j\) and \(h_j\) is an adaptive step size.
This continuous-time abstraction will be denoted as a Picard box. This enclosure satisfies:
\begin{definition}[Reachable tube enclosure~\cite{alexandre_dit_sandretto_validated_2016, Althoff2015ARCH, capdarticle}]
\begin{equation}
\mathrm{Reach}_{[t_j,t_{j+1}]}(\mathcal{Y}_0,\mathcal{P}) \subseteq [\tilde{y}_j]_{\mathcal{P}}, \bigcup_{j=0}^{N-1} [t_j,t_{j+1}] = [t_0,t_N],
\label{eq:reachablesetenclosure}
\end{equation}
where $\mathrm{Reach}_t(\mathcal{Y}_0,\mathcal{P}) = \{ y(t,y_0,p)\mid y_0\in\mathcal{Y}_0,\ p\in\mathcal{P} \},$
and $t_0$ and $t_N$ denote the start and end times of the simulation, respectively.
\end{definition}

This continuous-time representation, referred to as a tube $[\tilde{y}]_{\mathcal{P}}(t)$ for $t \in [t_0, t_N]$, is essential to preserve system behavior that may be lost through sparse time sampling.

\paragraph*{Notation}
In the following sections, several set operations are used: the Minkowski sum, denoted $\oplus$, and the Pontryagin difference, denoted $\ominus$.

\subsection{Signal temporal logic}
\label{sec:STL}
STL is an extension of Linear Temporal Logic designed to specify properties of continuous-time systems~\cite{maler_monitoring_2004}.
The syntax of STL is defined recursively as follows:
\begin{equation}
\phi := \mu \mid T \mid \neg \phi \mid \phi_1 \vee \phi_2 \mid \phi_1 \, U_{[a,b]} \, \phi_2,
\end{equation}
where $\phi$ is an STL formula, $T$ denotes the tautology, $\neg$ the negation, and $\mu$ is an atomic predicate specifying signal constraints (e.g., $x > 0$)~\cite{maler_monitoring_2004}.
A trace $(s,t)$ satisfies a formula at $t$ iff $(s,t)\models \phi$.
Derived operators are defined as follows: 
\begin{equation}
\begin{aligned}
     &F_{[a,b]} \phi = T U_{[a,b]} \phi, \quad G_{[a,b]} \phi = \neg(F_{[a,b]} \neg\phi)\\
     & \phi_1 \land \phi_2 \equiv \neg (\neg \phi_1 \lor \neg \phi_2), \quad \phi_1 \implies \phi_2 \equiv \neg \phi_1 \lor \phi_2.
\end{aligned}
\end{equation}

In this work, we consider the fragment of STL composed of arbitrarily nested ${F}_{[a,b]}$ and ${G}_{[a,b]}$ operators, which covers the specifications commonly encountered in robotic and control applications (reach-avoid, recurrence, sequencing, $\dots$).

\subsection{Three-Valued STL on Tubes}
We evaluate atomic predicates on reachable sets using Boolean intervals $\IB = \{0, 1, [0,1]\}$~\cite{besset2025cdc,Lercher2024CAV}:
\begin{equation}\label{eq:threeval}
    ([\tilde{y}]_{\mathcal{P}}, t) \models \mu := \begin{cases}
        1, & \text{if } [\tilde{y}]_{\mathcal{P}}(t) \subset \Xmu, \\
        0, & \text{if } [\tilde{y}]_{\mathcal{P}}(t) \cap \Xmu = \emptyset, \\
        {[0,1]}, & \text{otherwise},
    \end{cases}
\end{equation}
where $\Xmu$ is the predicate set, represented as a hyperbox to facilitate the formulation as a linear optimization problem.
Here $[0,1] \in \IB$ denotes the indeterminate value, not a real interval.
Logical operations are extended via Boolean interval arithmetic~\cite{jaulin_applied_2001}:
\begin{definition}[Boolean interval arithmetic~\cite{jaulin_applied_2001}]
\begin{equation}
  \begin{aligned}
[a] \vee [b] &= \{a \vee b \mid a \in [a], b \in [b]\}\\
\neg[a] &= \{\neg a \mid a \in [a]\}.
\end{aligned}
\label{def:booleaninterval}
\end{equation}
\end{definition}
\begin{align*}
\text{i.e. } 0 \wedge [0,1] = 0&, \quad 0 \vee [0,1] = [0,1],\\
1 \wedge [0,1] = [0,1]&, \quad 1 \vee [0,1] = 1.
\end{align*}
Satisfaction signals $[\mathcal{S}_\phi](t)$ are decomposed into certain and uncertain unitary signals, propagated bottom-up through the STL formula tree~\cite{besset2025cdc}.

For a bounded parameter set $\mathcal{P}$, the outcome is either:
\begin{equation}\label{eq:tubeverdict}
([\tilde{y}]_{\mathcal{P}}, t) \models \phi = 
\begin{cases}
    1, & \text{all trajectories satisfy } \phi, \\
    0, & \text{all trajectories violate } \phi, \\
    {[0,1]}, & \text{indeterminate.}
\end{cases}
\end{equation}

\subsection{Guaranteed probabilistic bound}\label{sec:probabound}
Similarly to ProbReach~\cite{Shmarov2015ProbReach}, which provides a bounded assessment of the probability of satisfaction, we propose:
\begin{proposition}
Let $\mathcal{S}_{c} \subseteq \mathcal{P}$ be a measurable set of stationary independent parameters with probability
$\Pr(\hat{p} \in \mathcal{S}_{c}) = c$. Then the random trajectory from $\hat{p}$, $\hat y(\cdot)$ follows:
\begin{equation}
\boxed{
\begin{aligned}
    \mathcal{S}_{c} \text{ is a validation set}
    &\;\Longrightarrow\;
    \Pr\big((\hat y, t) \models \phi\big) \ge c, \\[4pt]
    \mathcal{S}_{c} \text{ is an invalidation set}
    &\;\Longrightarrow\;
    \Pr\big((\hat y, t) \not\models \phi\big) \ge c .
\end{aligned}
}
\end{equation}
\label{prop:lowerbound}
\end{proposition}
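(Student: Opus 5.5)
The plan is to reduce the probabilistic statement to the set-based guarantee of the three-valued semantics, eq.~\eqref{eq:tubeverdict}, by a monotonicity-of-measure argument. First I will fix the meaning of the terms: $\mathcal{S}_c$ is a \emph{validation set} if the tube $[\tilde{y}]_{\mathcal{S}_c}$ computed with parameter set $\mathcal{S}_c$ (and the same initial set) gives $([\tilde{y}]_{\mathcal{S}_c},t)\models\phi = 1$. Symmetrically, it is an \emph{invalidation set} if this verdict is $0$. The random trajectory $\hat y$ is driven by the random parameter $\hat p$. It is stationary, meaning the parameter is drawn once and held constant in time, so $\hat y(\cdot)=y(\cdot,y_0,\hat p)$ is a deterministic function of $\hat p$. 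Independence is used only so that this pushforward is well defined and $\Pr(\hat p\in\mathcal{S}_c)=c$ is meaningful for the joint vector.

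The key step is the pointwise implication
\[
\hat p\in\mathcal{S}_c \;\Longrightarrow\; (\hat y,t)\models\phi \quad\text{when }\mathcal{S}_c\text{ is a validation set.}
\]
To prove it, take any $p\in\mathcal{S}_c$. By the enclosure property \eqref{eq:reachablesetenclosure} applied with $\mathcal{P}=\mathcal{S}_c$, we have $y(t,y_0,p)\in[\tilde{y}]_{\mathcal{S}_c}(t)$ for every $t\in[t_0,t_N]$. The soundness of the three-valued evaluation then applies. For each predicate, verdict $1$ (resp.\ $0$) in \eqref{eq:threeval} means the whole enclosure lies inside (resp.\ outside) $\Xmu$, so the trajectory's Boolean value equals the set verdict whenever the latter is determinate. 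The Boolean interval arithmetic of Definition~\ref{def:booleaninterval} is the exact set extension of $\vee$ and $\neg$, so by induction on the formula structure (through $F_{[a,b]}$ and $G_{[a,b]}$, which are unions and intersections over time of these operations), the Boolean satisfaction of the single trajectory is contained in the interval verdict. A verdict of $1$ therefore forces $(y(\cdot,y_0,p),t)\models\phi$. This is exactly the first case of \eqref{eq:tubeverdict}, which I will invoke rather than reprove. The invalidation case is identical, with verdict $0$.

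From the pointwise implication we get the event inclusion $\{\hat p\in\mathcal{S}_c\}\subseteq\{(\hat y,t)\models\phi\}$. Monotonicity of probability then yields $\Pr((\hat y,t)\models\phi)\ge\Pr(\hat p\in\mathcal{S}_c)=c$, and likewise for $\not\models$.

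The main obstacle is measurability: the event $\{(\hat y,t)\models\phi\}$ must be measurable for its probability to make sense. I would first sidestep this by stating the conclusion as an inner-probability bound, which is all the inclusion argument needs. If a genuine probability is wanted, I would argue that $p\mapsto y(\cdot,y_0,p)$ is continuous, because $f$ admits unique solutions and so depends continuously on parameters. With predicates given by hyperboxes and finitely nested $F$ and $G$, the satisfaction set is then Borel, though I would flag that continuous-time quantifiers over intervals make the fully rigorous version delicate.
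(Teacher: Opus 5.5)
Your proof is correct and follows the paper's route: the paper defines $\mathcal{P}_\phi$, treats a certified validation set as one satisfying $\mathcal{S}_c\subseteq\mathcal{P}_\phi$, concludes by monotonicity of the measure, and handles invalidation by passing to $\neg\phi$. You go slightly further than the paper: you derive that inclusion from the enclosure property and the soundness of the three-valued semantics, where the paper simply assumes it, and you flag the measurability of the satisfaction event, which the paper does not address.
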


\textit{Proof :}
Let, $\mathcal{P}_\phi = \left\{\, p \;\middle|\; (y(p), t) \models \phi = 1 \,\right\}$, denote the set of parameter values satisfying the formula~$\phi$.
We assume the existence of a certified validation subset $\mathcal{S}_c \subseteq \mathcal{P}_\phi$ with a probability measure $c$.
By monotonicity of the measure, for a random parameter $\hat p$ and $\hat y(\cdot) := y(\hat p, \cdot)$, it follows that:
\begin{equation}
   \Pr\big(\hat{p} \in \mathcal{S}_c\big) 
\leq 
\Pr\big(\hat{p} \in \mathcal{P}_\phi\big) 
= 
\Pr\big((\hat y, t) \models \phi\big).
\label{eq:safesetineq}
\end{equation}
Let $\phi' = \neg \phi$. Since $(y(p), t) \models \phi' \iff (y(p), t) \not\models \phi$, the monotonicity property extends to the invalidation set. $\blacksquare$

In other words, the probability of a validated parameter set lower-bounds the satisfaction probability.

\paragraph*{Probabilistic Notation}
$\mathbb{P}(S_c) = c$ denotes the probability measure of a measurable set $S_c \subset \Omega$, where $\Omega$ is the sample space with its measure function $\mathbb{P}$.

\subsection{Problem Formulation}

Consider a nonlinear uncertain system
\begin{equation}\label{eq:system}
    \dot{y}(t) = f(y(t), u(t), p), \quad y(0) \in Y_0, \quad p \in \mathcal{P} \subset \R^w,
\end{equation}
where the uncertain parameters $p_i$ follow known probability distributions at $t = 0$ and are assumed independent and stationary. No assumption is made on the state distribution of $\hat{y}$ at $t > 0$.
The control input is piecewise constant over $N_c$ intervals $[t_j, t_{j+1}], ~ j \in \{1, \dots, N_c\}$:
\begin{equation}\label{eq:control}
    u(t) =   U_{\mathrm{nom},j} + \sum_{n=1}^m \delta U_{j,n} \cdot v_{j,n} \quad \text{if } t \in [t_j, t_{j+1}],
\end{equation}
$U_{\mathrm{nom},j} = U_{ff,j} + U_{fb,j}(t) \in \mathbb{R}^m$ is the nominal command composed of a feedforward and a feedback term linearized at each step $j$ and
designed to satisfy the specification at $p = \mathbb{E}[p]$ (or the center of a bounded parameter set). For each input, a correction is applied, using $m$ orthogonal $\delta U_{j,n} \in \mathbb{R}^m$ encoding an amplitude. A corresponding decision variable $v_{j,n} \in [-1, 1]$ is also associated. All decision variables are flattened in a decision vector $V \in \mathbb{R}^{m\cdot N_c}$ with coordinate $v$.
The nominal trajectory computed from $U_\text{nom}$ is denoted $y_\text{ref}$.

\begin{problem}\label{prob:main}
Given a nominal trajectory satisfying an STL specification $\phi$ and a target confidence level $c \in [0,1]$, find the minimum-effort control correction guaranteeing $\Pr((\hat{y}, t) \models \phi) \geq c$:
\begin{equation}\label{eq:problembox}
\boxed{
\begin{aligned}
    \min_{v} \quad & \|V\|_1 \\
    \text{s.t.} \quad 
    & \Pr\bigl((\hat{y}, t) \models \phi\bigr) \geq c \\
    & v_j \in [-1, 1], \quad j \in \{1, \ldots, m \cdot N_c\}
\end{aligned}
}
\end{equation}
\end{problem}

Hence for bounded uncertainties: $\mathrm{s.t} ~ ({y}, t) \models \phi, \forall p \in \mathcal{P}$.

The chance constraint $\Pr((\hat{y}, t) \models \phi) \geq c$ is not directly tractable. The next sections reduce it to inclusion conditions on the reachable tube at critical time intervals, which are then enforced as linear constraints.

\section{Critical Constraints Identification}\label{sec:CriticalConstraints}

Translating a temporal formula into constraints requires identifying where the system must satisfy strict inclusion or exclusion. This is non-trivial due to the infinite choice of satisfaction instants under quantifiers such as $\exists t \in [a,b]$, and the indeterminacy from predicate crossings. We formulate these critical constraints for any STL formula (Section~\ref{sec:STL}) by tracking logical dependencies in continuous time.

\subsection{Mitigating Uncertainty over Subsets of Markers}
When evaluating an STL formula on a reachable tube, atomic predicates are assessed over time intervals defined by Picard boxes. The goal is to identify where the corrected tube must satisfy strict predicate inclusion or exclusion.

Following~\cite{besset2025cdc}, we propagate three-valued satisfaction signals through the STL formula tree. Uncertain time intervals are annotated with \emph{markers} pointing to the reachable sets responsible for indeterminate satisfaction as well as the associated predicates.

We define the uncertainty mitigation clause $Q_j$ at atomic predicate level as follows:
\begin{equation}
    Q_j \;\iff \; \bigl( [\tilde{y}_j]_{\mathcal{S}_c} \subset \mathcal{X}^\mu \bigr) \;\lor\; \bigl( [\tilde{y}_j]_{\mathcal{S}_c} \cap \mathcal{X}^\mu = \emptyset \bigr).
    \label{eq:predicateclause}
\end{equation}

\begin{definition}[Marker]
A marker $\mathcal{M}$ over time interval $I$ is a logical expression defined recursively as either:
\begin{itemize}
    \item a predicate clause $Q_j$ (leaf), or
    \item a logical tree with a combination $\mathcal{M}_A \lor \mathcal{M}_B$ or $\mathcal{M}_A \land \mathcal{M}_B$ of two markers (node).
\end{itemize}
Each marker is associated with an expected satisfaction value $E(\mathcal{M}) \in \{0, 1\}$ derived from the nominal trajectory.
\end{definition}

From the method proposed in~\cite{besset2025cdc}, access to the reachable sets leading to undetermined satisfaction is sufficient to mitigate indeterminacy.
However, this approach is too pessimistic, as only subsets of these markers usually suffice. In the following, we compute a DNF of sufficient correction clauses.

Let us define a master formula $\psi \in \{0,1\}$, a meta-Boolean encoding whether satisfaction of $\phi$ is guaranteed:
\begin{definition}[Master formula]
\begin{equation}
    \psi := \begin{cases}
    1, & \text{if } ([\tilde{y}]_{S_c}, t) \models \phi = 1,\\
    0, & \text{if } ([\tilde{y}]_{S_c}, t) \models \phi = [0,1].
\end{cases}
\end{equation}
\end{definition}

In our formulation, since the correction is computed along a reference trajectory, constraints are directly enforced in the clauses $Q_j$, whether they correspond to inclusion or exclusion conditions. The DNF becomes:
\begin{equation}
\psi \equiv \bigvee_{k\in\mathcal{K}} \bigwedge_{j\in C_k} Q_j,
\label{eq:dnfclause}
\end{equation}
where $\mathcal{K}$ is an index set and each $C_k$ denotes the subset of clauses $Q_j$ appearing in the $k$-th conjunction of the DNF. Each conjunction is logically minimal: removing any clause $Q_j$, $j \in C_k$, leaves $\psi = 0$.

However, this approach suffers from several limitations. First, the formula is monitored in continuous time, and the Picard boxes are abstractions over time intervals of varying lengths, propagated using the logic introduced in~\cite{maler_monitoring_2004, besset2025cdc}. Second, rewriting the formula in discrete time under a DNF using an LTL-based approach loses continuous-time guarantees and introduces a combinatorial complexity that can become intractable~\cite{Tran2025ProbStarTL}, particularly when online implementation is required.

\subsection{Propagation Rules to Track Logical Dependencies}
\label{sec:tracking}

We compute logical dependencies directly during marker propagation through the STL syntax tree, by discarding insufficient clauses and constructing a logical expression when necessary.
Following~\cite{maler_monitoring_2004, besset2025cdc}, satisfaction signals are decomposed into certain and uncertain unitary parts, each carrying an expected value derived from the nominal trajectory (e.g., $E(A)=1$ if satisfaction is $1$ on time interval $I_A$). Markers are propagated by exploiting dominance relations between expected values, as summarized in Table~\ref{tab:orpropag}, \ref{tab:andpropag}, \ref{tab:negpropag}, where (1) and (0) denote certain satisfaction.

\begin{table}[h]
\centering
\caption{Marker propagation for $C = A \lor B$}
\begin{tabular}{cccccc}
\hline
$\mathcal{M}_A$ & $E(A)$ & $\mathcal{M}_B$ & $E(B)$ & $\mathcal{M}_C$ & $E(C)$ \\
\hline
$\checkmark$ & 0 & $\times$ & (0) & $\mathcal{M}_A$ & 0 \\
$\checkmark$ & 1 & $\times$ & (0) & $\mathcal{M}_A$ & 1 \\
$\times$ & (0) & $\checkmark$ & 0 & $\mathcal{M}_B$ & 0 \\
$\times$ & (0) & $\checkmark$ & 1 & $\mathcal{M}_B$ & 1 \\
$\checkmark$ & 1 & $\checkmark$ & 0 & $\mathcal{M}_A$ & 1 \\
$\checkmark$ & 0 & $\checkmark$ & 1 & $\mathcal{M}_B$ & 1 \\
$\checkmark$ & 1 & $\checkmark$ & 1 & $\mathcal{M}_A \lor \mathcal{M}_B$ & 1 \\
$\checkmark$ & 0 & $\checkmark$ & 0 & $\mathcal{M}_A \land \mathcal{M}_B$ & 0 \\
\hline
\end{tabular}
\label{tab:orpropag}
\end{table}

\begin{table}[h]
\centering
\caption{Marker propagation for $C = A \land B$}
\begin{tabular}{cccccc}
\hline
$\mathcal{M}_A$ & $E(A)$ & $\mathcal{M}_B$ & $E(B)$ & $\mathcal{M}_C$ & $E(C)$ \\
\hline
$\checkmark$ & 0 & $\times$ & (1) & $\mathcal{M}_A$ & 0 \\
$\checkmark$ & 1 & $\times$ & (1) & $\mathcal{M}_A$ & 1 \\
$\times$ & (1) & $\checkmark$ & 0 & $\mathcal{M}_B$ & 0 \\
$\times$ & (1) & $\checkmark$ & 1 & $\mathcal{M}_B$ & 1 \\
$\checkmark$ & 0 & $\checkmark$ & 1 & $\mathcal{M}_A$ & 0 \\
$\checkmark$ & 1 & $\checkmark$ & 0 & $\mathcal{M}_B$ & 0 \\
$\checkmark$ & 0 & $\checkmark$ & 0 & $\mathcal{M}_A \lor \mathcal{M}_B$ & 0 \\
$\checkmark$ & 1 & $\checkmark$ & 1 & $\mathcal{M}_A \land \mathcal{M}_B$ & 1 \\
\hline
\end{tabular}
\label{tab:andpropag}
\end{table}

\begin{table}[h]
\centering
\caption{Marker propagation for $D = \neg C$}
\begin{tabular}{cccc}
\hline
$\mathcal{M}_C$ & $E(C)$ & $\mathcal{M}_D$ & $E(D)$ \\
\hline
$\checkmark$ & 1 & $\mathcal{M}_C$ & 0 \\
$\checkmark$ & 0 & $\mathcal{M}_C$ & 1 \\
\hline
\end{tabular}
\label{tab:negpropag}
\end{table}

All reported propagation rules demonstrate the computation of an uncertain satisfaction signal ($C$ or $D$) over a time interval $I$. A checkmark ($\checkmark$) indicates that a marker is present in the uncertain signal, while a cross ($\times$) denotes a certain signal carrying no marker. 

When a new uncertain signal is computed, dominance relations are exploited: for $\lor$, the marker with $E = 1$ dominates, as it alone determines satisfaction. Similarly, for $\land$ the marker with $E = 0$ dominates~\cite{jaulin_applied_2001}. In the symmetric cases where both expected values agree, the resulting marker captures the joint dependency as $\mathcal{M}_A \lor \mathcal{M}_B$ or $\mathcal{M}_A \land \mathcal{M}_B$ accordingly. Finally, negation simply inherits the marker while flipping the expected value (due to the symmetry $\neg [0,1] = [0,1]$).

For example, in Table~\ref{tab:orpropag}, line 6, consider $A \lor B$ with $E(A)=0$ and $E(B)=1$. Mitigating $A$ alone leaves the satisfaction at $[0,1]$, since $[0,1] \lor 0 = [0,1]$. Only mitigating $B$ resolves the uncertainty: $[0,1] \lor 1 = 1$. Here $B$ dominates the satisfaction.

According to these propagation rules, evaluating a formula at a given time yields a mostly flat logical structure between markers. This keeps the DNF computation tractable even while considering 50 to 700 markers as confirmed by the millisecond-range computation times reported in Section~\ref{sec:Results}.

\begin{figure}[htbp]
   \begin{center}
    \includegraphics[width=\linewidth, trim=0 8 0 5, clip]{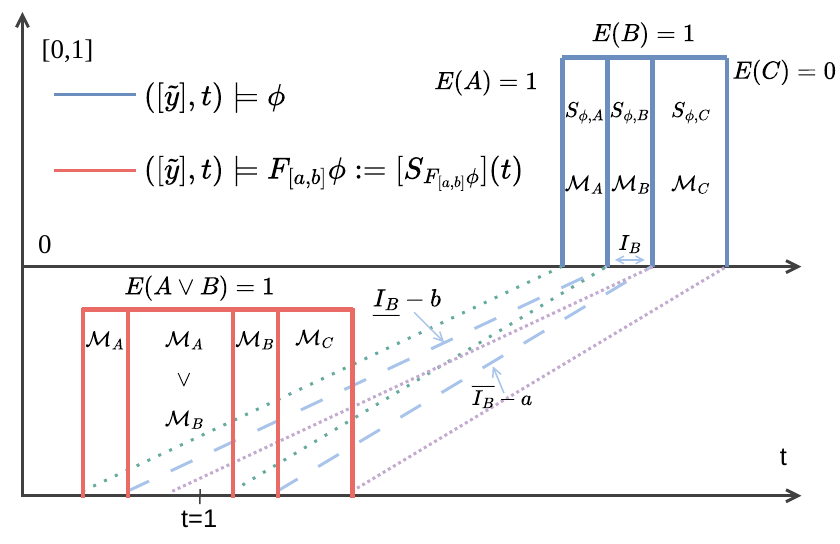}
      \caption{Computing the satisfaction signal of $F_{[a,b]}\phi$ under uncertainty and marker propagation, shift-back in dashed line, x and y axis is time and satisfaction respectively. $I_A$ and $I_B$ are the time intervals defining each unitary signal $S_{\phi,A}$, $S_{\phi,B}$.}
      \label{satmarker}
      \end{center}
   \end{figure}
%
   
An example of marker propagation is shown in Fig.~\ref{satmarker}. An uncertain signal for a formula $\phi$ is decomposed into unitary parts, each associated with a marker. Each part $S_{\phi,i}$ is shifted back by $\ominus [a,b]$ to compute the satisfaction signal of $ {F}_{[a,b]}\phi$:
\begin{equation}
[S_{  {F}_{[a,b]}\phi}](t) = \bigvee_{i \in \{A,B,C\}} \left(S_{\phi,i} \ominus [a,b]\right)(t),
\end{equation}
following~\cite{maler_monitoring_2004, besset2025cdc}. For example, $\mathcal{M}_A = Q_1 \land Q_2 \land Q_3$ and $\mathcal{M}_B = Q_4 \land Q_5$, yielding the DNF at $t=1$:
\begin{equation}
\psi = (Q_1 \land Q_2 \land Q_3) \lor (Q_4 \land Q_5).
\end{equation}
Guaranteed satisfaction of $\phi$ thus requires mitigating either clause $Q_1 \land Q_2 \land Q_3$ or $Q_4 \land Q_5$. This selection of clause and their expected value are the necessary constraints that the system should meet under control correction. In the implementation, the resulting logical expression is converted to DNF using standard Boolean algebra transformations, duplicated clauses being merged.
While~\cite{besset2025cdc} treats these markers as jointly necessary, the propagation rules of Tables~\ref{tab:orpropag}--\ref{tab:negpropag} recover their logical relations, exposing \emph{alternative} sufficient corrections rather than a single conservative set, each conjunction being logically minimal since dominated markers are discarded.

\begin{remark}[Reference trajectory]
In this example, the signal part associated with $\mathcal{M}_C$ yields a satisfaction of $0$ regardless of the correction, since the nominal trajectory has an expected value of $0$. If the formula requires a $1$ satisfaction over the corresponding time interval, the initial trajectory is therefore ill-defined and well identified.
\end{remark}

\subsection{Inherited Constraints}\label{sec:identify}
\subsubsection{Problem formulation}
The key insight is that the exact position of the corrected tube with respect to the nominal trajectory is irrelevant: it suffices to guarantee that the entire reachable tube lies inside the required predicate sets at the prescribed time instants or intervals, effectively reducing the problem to several coupled reach-avoid constraints.
The DNF formulation decomposes this condition into alternative candidate constraint sets, among which the least costly is selected.

Given a reachable tube computed from an initial parameter set $S_\text{init}$, a tube $[\tilde{y}]_{S_c, V}$, originating from $S_c \subset S_\text{init}$ and command corrections, the decision vector $V$ identification problem is formulated as:

\begin{equation}\label{eq:problemdnf}
\boxed{
\begin{aligned}
    \min_{v, k} \quad &  \|V\|_1 \\
    \text{s.t.} \quad 
    & [\tilde{y}_j]_{S_c, V} \models Q_j, \quad \forall j \in \mathcal{C}_k, \\
    & \mathcal{C}_k \in \mathrm{DNF}(\phi, y_\text{ref}, [\tilde{y}]_{S_\text{init}}),\\
    & [\tilde{y}_j]_{S_c, V} \models \mathrm{Cst}(\phi, \mathcal{C}_k) \\
    & v_n \in [-1, 1], \quad n \in \{1, \ldots, m \cdot N_c\}
\end{aligned}
}
\end{equation}
 where $\mathrm{Cst}(\phi, \mathcal{C}_k)$ collects the constraints inherited from the formula. Intuitively, these constraints capture all conditions that must remain satisfied once a correction is enforced.
 
 \subsubsection{Inherited Constraints}
 Resolving the uncertainty captured by the $\mathcal{C}_k$ clauses is \textbf{not sufficient}: the correction must also satisfy all implicit constraints inherited from the nominal trajectory (e.g., obstacle avoidance, timing of temporal operators).
 
\begin{definition}[Inherited constraints]
Let $\phi$ be an STL formula and let $\mathcal{C}_k \in \mathrm{DNF}(\phi, y_{\text{ref}}, [\tilde{y}]_{S_{\text{init}}})$ be a set of clauses such that the master formula $\psi = 1$. $Q'_j$ is a predicate clause on time interval from Eq.~\eqref{eq:reachablesetenclosure} and \eqref{eq:predicateclause}. The set of constraints is:
\begin{equation}
    \mathrm{Cst}(\phi, \mathcal{C}_k) := \{ Q'_j \mid (\bigwedge_{i \in \mathcal{C}_k} Q_i)\land \neg Q'_j \Rightarrow \psi = 0 \}.
\end{equation}
\end{definition}
 
To identify these, we fix the predicate clauses in $\mathcal{C}_k$ as certain based on their expected value, and relax all other atomic predicate signals to $[0,1]$ while keeping their expected value unchanged. The tracking algorithm of Section~\ref{sec:tracking} is then applied, potentially yielding hundreds to thousands of clauses. Portions whose indeterminacy does not imply formula-level indeterminacy are discarded. The overall identification remains in a tractable time, for example 20--50\,ms range as reported in Section~\ref{sec:Results}, owing to the efficiency of the logical tracking.

   \begin{figure}[htbp]
       \begin{center}
        \includegraphics[width=\linewidth, trim=10 0 80 0, clip]{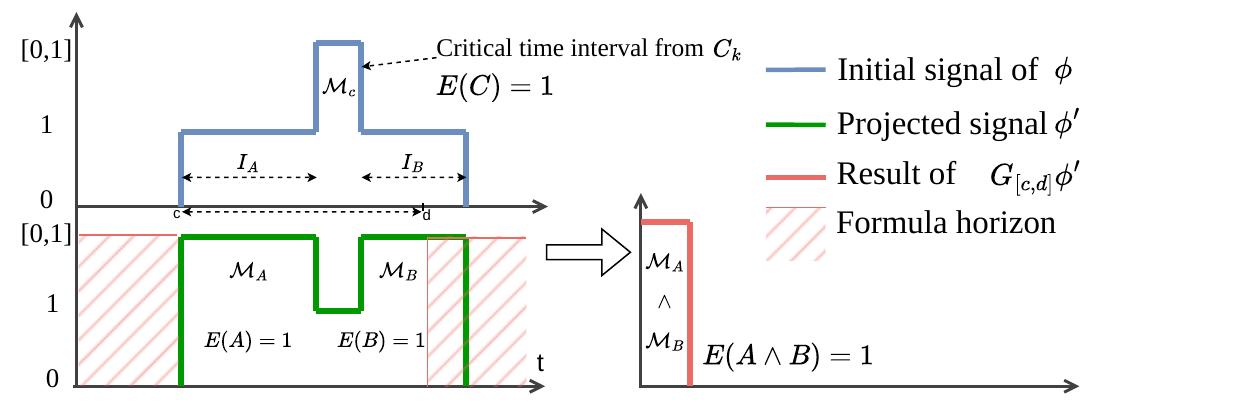}
      \caption{Initial signal with critical time in blue, $[0,1]$ projection in green, projection result in red with satisfaction relation.}
      \label{projection}
      \end{center}
   \end{figure}

The approach is illustrated in Fig.~\ref{projection}. The original tube exhibits uncertainty due to the $\mathcal{C}_k$ clause in $\mathcal{M}_C$. To derive the necessary constraints associated with this clause, the full signal is switched to $[0,1]$, and the satisfaction of the projected signal is evaluated while tracking timed and logical dependencies. This yields the additional necessary constraints $\mathcal{M}_A \land \mathcal{M}_B$ required to satisfy the full formula. Here $I_A$ and $I_B$ are time intervals corresponding to Picard boxes. The red-hatched regions correspond to intervals filtered out based on the formula horizon when evaluating at $t = 0$, this horizon is computable for any formula~\cite{Lercher2024CAV, maler_monitoring_2004}.

\begin{remark}[Optimality]
\label{rmk:optimality}
Obtaining the global minimum correction would require computing the DNF of constraints from $\mathrm{Cst}(\phi, \mathcal{C}_k)$, yielding $K'$ additional disjunctions and potentially $K_{\textrm{tot}} \approx K \cdot K'$ clauses to compute control correction. Here we compute this DNF and filter to one clause to keep the computation tractable, using a set-membership score $sc([\tilde{y}], \mathcal{X}^\mu) =\frac{\nu (\mathcal{X}^\mu \cap [\tilde{y}])}{\nu [\tilde{y}]}$, with $\nu$ the volume of the set.
\end{remark}

\subsection{Probabilistic Extension and Scaling Vector}\label{sec:grid}
The formulation in Eq.~\eqref{eq:problemdnf} naturally extends to the probabilistic setting of Proposition~\ref{prop:lowerbound}. Let $S_{\text{init}}$ be a mean-centered parameter set with $\mathbb{P}(S_{\text{init}})\gg c$. A scaling vector $A$ defines a subset $S_c \subseteq S_{\text{init}}$ such that $\mathbb{P}(S_c)=c$.

For $w\geq 2$, infinitely many such $A$ exist, inducing different distributions of uncertainty across parameters and thus different reachable tube widths. This flexibility yields distinct constraints on the control zonotopes $Z_{v,k}$, enabling the selection of a minimal correction (see Section~\ref{sec:Results}). The vector $A$ can be sampled over a discrete grid, and the corresponding sets propagated using contractor-based methods~\cite{alexandre_dit_sandretto_confidence_2021}.

\begin{remark}[Chance constraints]
The probabilistic constraint $\Pr((\hat y, t) \models \phi) \geq c$ of Problem~\ref{prob:main} is a chance constraint. Rather than solving it as a stochastic program~\cite{safeautonomy2018}, Proposition~\ref{prop:lowerbound} reduces it to a deterministic inclusion on the reachable tube of $S_c$.
\end{remark}

\section{Control Correction Using Zonotopes}\label{sec:ControlCorrection}
This section computes the control correction enforcing strict reachable set inclusion or exclusion.
\subsection{Command Parameterization}
We encode the control correction as zonotopic generators. On each time interval $[t_k, t_{k+1}],~ k \in \{1, \dots , N_c\}$, the corrected command is
\begin{equation}\label{eq:correctedcmd}
    u(t) =   U_{\mathrm{nom},k} + \sum_{n=1}^m \delta U_{k,n} \cdot v_{k,n}, ~ v_{k,n} \in [-1, 1],
\end{equation}
where $\delta U_{k,n} \in \R^m$ is a fixed correction direction and amplitude and $v_{k,n}$ is the decision variable. Each $v_{k,n}$ (flattened in $v_j$) is associated with a unique noise symbol $\varepsilon_{\delta,j}$ in the zonotopic representation. The resulting control zonotope has at most $m \cdot N_c$ generators, each symbol corresponding to a coordinate of decision vector $V$. 

After validated integration, each reachable set can be written as
\begin{equation}\label{eq:propagated}
\begin{aligned}
    [\tilde{y}]_{S_c, \delta U} =& X_0 ~ \oplus  ~ \Bigr\{\sum_{i} X_{A,i} \varepsilon_i \Big| \varepsilon_i \in [-1,1] \Bigr\} \\
    \oplus ~ &\Bigr\{\sum_{j} X_{\delta U_j} \varepsilon_{\delta,j} \Big| \varepsilon_{\delta,j} \in [-1,1] \Bigr\}~ \oplus~ \Zeps \\
    := & X_0 \oplus Z_A \oplus Z_v \oplus \Zeps,
\end{aligned}
\end{equation}
where $Z_A$ captures the uncertainty arising from the parameter set $S_c = S_\text{init}(A)$, $Z_v$ encodes the effect of each correction $v_j$ on the reachable set through the nonlinear dynamics, and $\Zeps$ aggregates the contributions of nonlinearities either from the control or uncertainties, the Picard box time enclosure, and numerical approximation errors.

The reachability solver tracks symbolic dependencies through the propagation. Fixing $\varepsilon_{\delta,j} = v_j^*$ shifts the center of the reachable set by $\delta y_k = \sum_{j} X_{\delta U_j, k} \cdot v_j^*$.

\subsection{Problem Formulation}
\subsubsection{Geometric Constraints}
\begin{proposition}
For each critical reachable set $k$ (at $[t_k, t_{k+1}]$) with an inclusion constraint in $\Xmu$, the problem can be formulated as an inclusion in a reduced predicate:
\begin{equation}\label{eq:geometriccst}
\begin{aligned}
    \Xmu_{A,\epsilon,k} &:= (\Xmu \ominus X_{0,k}) \ominus \Box({Z_\epsilon}_{,k} \oplus Z_{A,k}),\\
     &\text{and }Z_{v,k} \subseteq \Xmu_{A,\epsilon,{k}} \implies [\tilde{y}_k] \subset \Xmu,
\end{aligned}
\end{equation}
where $\Box(\cdot)$ denotes the interval hull. Note that the equivalence holds when $Z_A \oplus {\Zeps}$ is a box, which is the case after a necessary order reduction to compute the hull of the zonotope for predicate evaluation.
\end{proposition}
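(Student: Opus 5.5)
The plan is to use the decomposition \eqref{eq:propagated} of the corrected reachable set, $[\tilde{y}_k] = X_{0,k} \oplus Z_{A,k} \oplus Z_{v,k} \oplus Z_{\epsilon,k}$, together with the two standard facts about Minkowski sums and Pontryagin differences: $(P \ominus Q) \oplus Q \subseteq P$, and $R \subseteq P \ominus Q \iff R \oplus Q \subseteq P$. Throughout, $X_{0,k}$ is a point (the centre), so $\Xmu \ominus X_{0,k}$ is just the translate $\Xmu - X_{0,k}$. Because $\Xmu$ is a hyperbox, the difference by a box is again a box, obtained by shrinking each side by the corresponding half-width. This makes $\Xmu_{A,\epsilon,k}$ explicitly computable and, in particular, nonempty or empty in a checkable way.

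\textbf{Sufficiency.}
1. Assume $Z_{v,k} \subseteq \Xmu_{A,\epsilon,k}$.
2. By the adjunction above, with $Q = \Box(Z_{\epsilon,k}\oplus Z_{A,k})$, this gives $Z_{v,k} \oplus \Box(Z_{\epsilon,k}\oplus Z_{A,k}) \subseteq \Xmu - X_{0,k}$.
3. Since $Z_{\epsilon,k}\oplus Z_{A,k} \subseteq \Box(Z_{\epsilon,k}\oplus Z_{A,k})$ and Minkowski sums are monotone, $Z_{v,k} \oplus Z_{A,k}\oplus Z_{\epsilon,k} \subseteq \Xmu - X_{0,k}$.
4. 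Adding $X_{0,k}$ back yields $[\tilde{y}_k] \subseteq \Xmu$.

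To get the strict inclusion $\subset$ used in \eqref{eq:threeval}, I would either read $\subset$ as non-strict inclusion, as the paper does in \eqref{eq:threeval}, or shrink $\Xmu$ by an arbitrarily small margin. Soundness with respect to true trajectories then follows from \eqref{eq:reachablesetenclosure}: every trajectory from $S_c$ under the fixed correction lies in $[\tilde{y}_k]$ on $[t_k,t_{k+1}]$, so the clause $Q_k$ is satisfied with the required (inclusion) value.

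\textbf{Converse when $Z_A\oplus Z_\epsilon$ is a box.} Here $\Box(Z_{\epsilon,k}\oplus Z_{A,k}) = Z_{\epsilon,k}\oplus Z_{A,k}$. If $[\tilde{y}_k] \subseteq \Xmu$, then $Z_{v,k} \oplus (Z_{A,k}\oplus Z_{\epsilon,k}) \subseteq \Xmu - X_{0,k}$. The adjunction then gives $Z_{v,k} \subseteq \Xmu_{A,\epsilon,k}$ exactly, so the two conditions are equivalent. This justifies the remark that after order reduction, which boxes the remainder, no conservatism is lost at this step.

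\textbf{Fixed correction and main obstacle.} The same chain should be applied with $Z_{v,k}$ replaced by the singleton shift $\delta y_k = \sum_j X_{\delta U_j,k} v_j^*$. That singleton is what the decision vector actually produces once the $\varepsilon_{\delta,j}$ are fixed, and it turns the condition into linear inequalities on $V$. I expect the main obstacle to be justifying that $X_{0,k}$, $Z_{A,k}$ and $Z_{\epsilon,k}$ are unchanged when the symbols $\varepsilon_{\delta,j}$ are instantiated. The remainder $Z_\epsilon$ also encloses nonlinear cross terms between control and uncertainty. My approach is to rely on the enclosure being computed with $\varepsilon_{\delta,j}$ ranging over all of $[-1,1]$: any instantiation $v^*\in[-1,1]^{mN_c}$ produces a subset of that enclosure, with the same $Z_A$ and $Z_\epsilon$ bounds valid uniformly. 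So the decomposition used above is sound for every admissible $V$.
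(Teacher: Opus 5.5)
Your proof is correct, but it reaches the inclusion by a different mechanism than the paper. You use only the adjunction $R \subseteq P \ominus Q \iff R \oplus Q \subseteq P$ and monotonicity of $\oplus$ under $Z_{A,k}\oplus Z_{\epsilon,k} \subseteq \Box(Z_{\epsilon,k}\oplus Z_{A,k})$. The paper instead lifts everything to interval hulls, using three facts. First, $\Xmu$ is an axis-aligned box, so a set lies in $\Xmu \ominus X_{0,k}$ iff its hull does. Second, $\Box(Z_{v,k})\oplus\Box(Z_{A,k}\oplus Z_{\epsilon,k}) = \Box(Z_{v,k}\oplus Z_{A,k}\oplus Z_{\epsilon,k})$. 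Third, $\Xmu_{A,\epsilon,k}$ is a box, so $\Box(Z_{v,k})\subseteq \Xmu_{A,\epsilon,k} \iff Z_{v,k}\subseteq \Xmu_{A,\epsilon,k}$.

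Your sufficiency direction is more elementary and never uses the shape of $\Xmu$. It therefore carries over unchanged to, e.g., polytopic predicates, where the reduced set is still a polytope and the constraint on $G_kV$ stays linear.

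The hull route, in exchange, exploits the box structure to give more. Because $\Box$ commutes with Minkowski sums, every step of the paper's chain is reversible; the paper writes one of them as $\implies$, but it is an equivalence. Hence $Z_{v,k}\subseteq\Xmu_{A,\epsilon,k} \iff [\tilde{y}_k]\subseteq\Xmu$ holds for any $Z_{A,k}$, $Z_{\epsilon,k}$, not only when their sum is a box. Your converse relies on $\Box(Z_{\epsilon,k}\oplus Z_{A,k}) = Z_{\epsilon,k}\oplus Z_{A,k}$, so it proves the statement as written but under a stronger hypothesis than needed. For the same reason, your comment that order reduction is what makes this step lossless is slightly off: for a box predicate, replacing $Z_{A,k}\oplus Z_{\epsilon,k}$ by its hull costs nothing regardless.

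Your last paragraph goes beyond the paper's proof. There you replace $Z_{v,k}$ by the point $G_kV$ and require that $Z_{A,k}$, $Z_{\epsilon,k}$ bound the remainder uniformly for every $v\in[-1,1]^{m N_c}$. The paper takes this from the solver's symbolic dependency tracking and the subsequent remark on admissible control bounds. Your reading is consistent with how the LP is actually built.
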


\textit{Proof: }
Starting from the full inclusion:
\begin{equation}
    X_{0,k} \oplus Z_{v,k} \oplus Z_{A,k} \oplus {Z_\epsilon}_{,k} \subseteq \Xmu,
\end{equation}
we have by translating $-X_{0,k}$:
\begin{equation}
    Z_{v,k} \oplus Z_{A,k} \oplus {Z_\epsilon}_{,k} \subseteq \Xmu \ominus X_{0,k}.
\end{equation}

Since $\Xmu$ is an axis-aligned box, $\Box({Z_\epsilon}_{,k})$ and $\Box(Z_{A,k})$ is a zero-centered symmetric interval hull~\cite{jaulin_applied_2001}:
\begin{equation}
\begin{aligned}
    &\Box(Z_{v,k}) \subseteq (\Xmu \ominus X_{0,k}) \ominus \Box({Z_\epsilon}_{,k} \oplus Z_{A,k}).\\
    &\implies \Box(Z_{v,k} \oplus Z_{A,k} \oplus {Z_\epsilon}_{,k}) \subseteq \Xmu \ominus X_{0,k} \\
    & \iff Z_{v,k} \oplus Z_{A,k} \oplus {Z_\epsilon}_{,k} \subseteq \Xmu \ominus X_{0,k}
\end{aligned}
\end{equation}

Each successive Pontryagin difference preserves the axis-aligned box structure. The resulting set $\Xmu_{A,\epsilon,k} := \Xmu \ominus X_{0,k} \ominus \Box({Z_\epsilon}_{,k} \oplus Z_{A,k})$ is a box (possibly $\emptyset$ if the uncertainty exceeds the predicate margin).
Then, 
\begin{equation}
    \Box(Z_{v,k}) \subseteq \Xmu_{A,\epsilon,k} \iff Z_{v,k} \subseteq \Xmu_{A,\epsilon,k} \implies
    [\tilde{y}_k] \subset \Xmu ~\blacksquare
\end{equation}

The inclusion of $Z_{v,k}$ in the reduced predicate set $\Xmu_{A,\epsilon,k}$ is equivalent to a set of linear constraints~\cite{jaulin_applied_2001}, which can be directly incorporated into the LP formulation. For a fixed scaling vector $A$, the minimum-effort control correction is obtained by solving the linear program
\begin{equation}\label{eq:lp}
\begin{aligned}
    \min_{v} \quad & \sum_{j=1}^{m \cdot N_c} |v_j| \\
    \text{s.t.} \quad & C_{\min,k} \leq G_k V \leq C_{\max,k}, \quad \forall k \in \mathcal{K}, \\
    & v_j \in [-1,1],
\end{aligned}
\end{equation}
where $G_k$ is the command generator matrix, $C_{\min,k}$ and $C_{\max,k}$ define the bounds of the reduced predicate, and $\mathcal{K}$ denotes the set of critical reachable sets associated with a selected clause $\mathcal{C}$ and its inherited constraints $\mathrm{Cst(\phi,\mathcal{C})}$ (see Eq.~\eqref{eq:problemdnf}). Exclusion constraints are handled similarly via half-space selection, reducing to an equivalent inequality.

\begin{remark}[Theoretical guarantee]
The theoretical guarantee of Eq.~\eqref{eq:geometriccst} holds if and only if each control correction remains within its admissible bounds: every coordinate of $V$, $v_j \in [-1, 1]$, as ${Z_\epsilon}_{,k}$ is computed assuming these bounds.
\end{remark}

\subsection{Feasibility and Iterative Refinement}
\label{sec:feasibility}
Infeasibility of the LP in Eq.~\eqref{eq:lp} does not necessarily imply infeasibility of the original problem, but may result from conservatism in $Z_\epsilon$ or tight bounds $v_j \in [-1,1]$. These effects are coupled, as relaxing bounds increases $Z_\epsilon$. 
An iterative scheme progressively tightens relaxed control bounds after recomputing the tube, though convergence is not guaranteed. Iterations are capped and persistent infeasibility triggers a failsafe fallback (replanning or safe stop).

\section{Application and Results}
The previous results are applied to an uncertain nonlinear vehicle. The nominal trajectory is chosen to present ambiguous cases that exercise the key features of our approach.
\label{sec:Results}

\begin{figure*}[th!]
    \begin{center}
        \includegraphics[width=\linewidth, trim=10 160 40 10, clip]{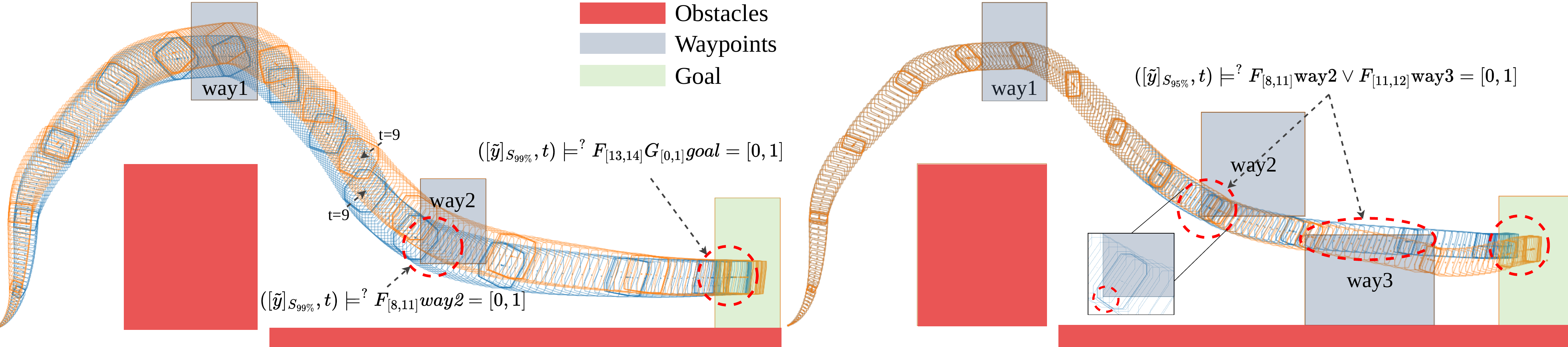}
        \caption{Initial tube in blue for $c=0.99$ (left) and $c=0.95$ (right), corrected tube in orange, red circles show indeterminacy regions.}
        \label{probadisturb}
    \end{center}
\end{figure*}

\subsection{Setup}
We consider a Dubins-like vehicle with dynamics
\begin{equation}
\begin{aligned}
    \dot{x} &= v\cos\theta, \quad \dot{y} = v\sin\theta, \\
    \dot{v} &= K_v(u_v - v), \quad \dot{\theta} = K_c(u_c - \theta),
\end{aligned}
\end{equation}
where $(u_v,u_c)^\top = U_{\mathrm{nom},j} = U_{ff,j} + U_{fb}(x,y,x_{\mathrm{ref},j},y_{\mathrm{ref},j})$, and $K_v$, $K_c$ are the throttle and steering gains. The piecewise command has $N_c=15$ steps, yielding $2N_c=30$ decision variables. Uncertainties affect $K_v$, $K_c$, and additive perturbations $P$, leading to
\begin{equation}
    \dot{v} = K_v(u_v - v) + P_v, \qquad \dot{\theta} = K_c(u_c - \theta) + P_c.
\end{equation}

The STL specification involves five predicates: two waypoints \textit{way1} and \textit{way2}, two obstacles \textit{obs1} and \textit{obs2}, and a \textit{goal} region:
\begin{equation}
\begin{aligned}
    \phi & = \underbrace{\left( F_{[5,7]} \textit{way1} \land F_{[8,11]} \textit{way2} \land G_{[0,t_f]} (\neg \textit{obs1} \land \neg \textit{obs2}) \right)}_{\text{waypoints \& obstacle avoidance (\textbf{flat} formula)}} \\
\land & \underbrace{\left( G_{[10,11]}\left(\textit{way2} \Rightarrow F_{[3,4]} \textit{goal}\right) \land F_{[13,14]} G_{[0,1]} \textit{goal} \right)}_{\text{goal reaching (\textbf{nested} formula)}}
\end{aligned}
\end{equation}
requiring the vehicle to visit \textit{way1} between $t \in [5,7]$ and \textit{way2} between $t \in [8,11]$, avoid both obstacles over the full horizon, and reach \textit{goal} within 4 seconds of passing through \textit{way2}, and stabilize in \textit{goal}. Reachable tubes are computed using DynIbex\footnote{perso.ensta.fr/~chapoutot/dynibex/}~\cite{alexandre_dit_sandretto_validated_2016} with validated Runge-Kutta integration. The LP is solved using GLPK, and the confidence grid uses $N_A = 84$ points.

\subsection{Bounded disturbances} 

The first scenario considers bounded uncertainties: $[K_v] = [2, 2.2]$, $[K_c] = [1.8, 2]$, and $[P_v] = [-0.025, 0.025] = 2[P_c]$. Although these perturbations may vary over time, validated numerical integration requires them to be treated as constant within each integration step~\cite{alexandre_dit_sandretto_validated_2016}. 

   \begin{figure}[htbp]
       \begin{center}
        \includegraphics[width=\linewidth, trim=5 40 38 20, clip]{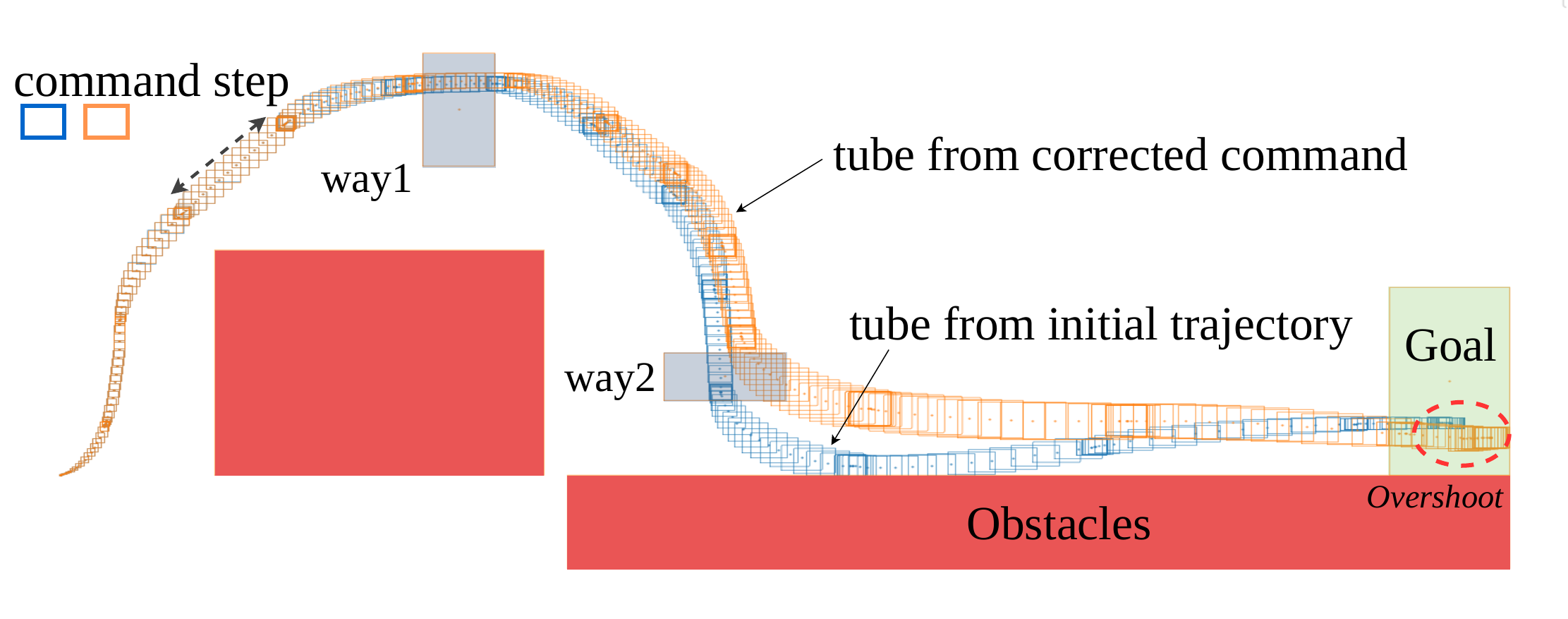}
      \caption{Initial tube in blue vs tube with corrected command in orange.}
      \label{boundeddisturb}
      \end{center}
   \end{figure}

The result is shown in Fig.~\ref{boundeddisturb}. The blue tube (no correction) intersects \textit{obstacle~2}, triggering the correction. The orange tube, obtained after solving Eq.~\eqref{eq:lp}, guarantees satisfaction of~$\phi$. The correction remains zero until deviation begins, then decelerates before \textit{waypoint~2} and accelerates toward the goal, producing a slight overshoot which can be reduced using finer piecewise steps.

\subsection{Probabilistic Threshold}

The second scenario (Fig.~\ref{probadisturb}, left) considers Gaussian uncertainties: $K_v \sim \mathcal{N}(2.1,\, 0.025^2)$, $K_c \sim \mathcal{N}(2.0,\, 0.025^2)$, and independent disturbances $P_v, P_c \sim \mathcal{N}(0,\, 0.04^2)$.
   
The reach sequence timing is modified from $\mathbf{G}_{[10,11]}$ to $\mathbf{G}_{[8,9]}$. The constraint computation correctly enforces exclusion over this period, preventing premature triggering of the reaching sequence. Due to the over-approximation $Z_\epsilon$, the deceleration before \textit{waypoint~2} is overly conservative. After this period, the system accelerates and stabilizes within the goal region.

The third scenario (Fig.~\ref{probadisturb}, right) is a multi-target case where uncertainty affects both waypoints and the stabilization in the goal region. The correction accelerates and steers the system to the right over a single correction step, after which the feedback rejoins the original trajectory and the system stabilizes correctly within the goal for 1\,s.

Fig.~\ref{sampleproba} displays the minimal corrections computed over a grid of 84 different parameter sets $S_{0.95}$ (thus $84$ scaling vectors), satisfying $\Pr\bigl((\hat{y}, t) \models \phi\bigr) \geq 0.95$, revealing variability in the correction magnitude with a global minimum of $\sum_j |\delta U_j \, v_j| = 1.5$ and a maximum of $2$. This trade-off is of practical interest when computation time is not critical.

\begin{figure}[htbp]
       \begin{center}
        \includegraphics[width=\linewidth, trim=10 2 0 10, clip]{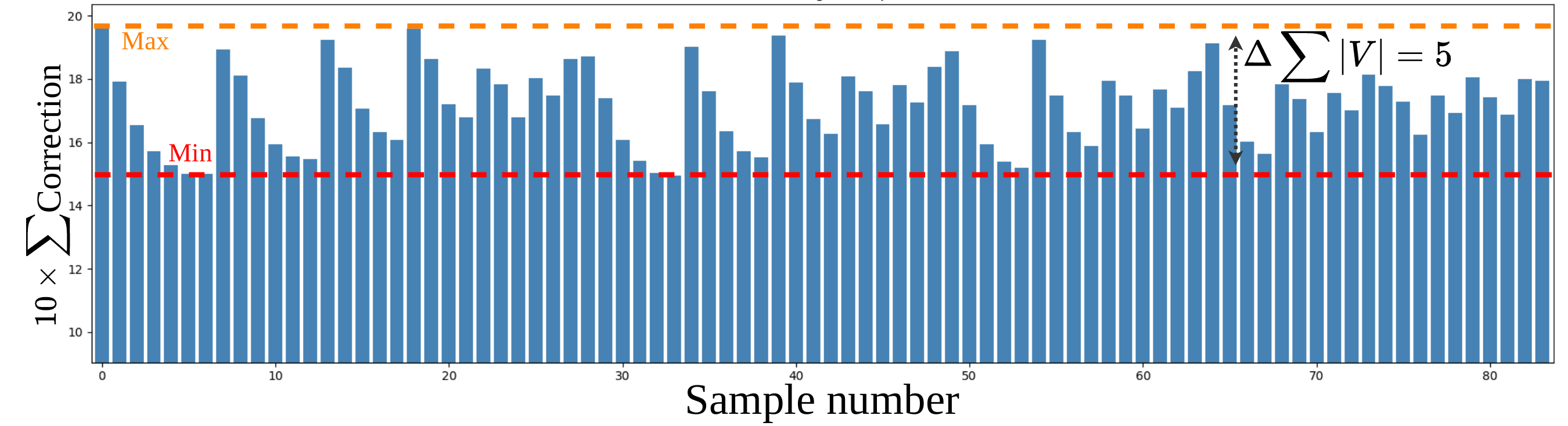}
      \caption{Sample number on the confidence grid vs sum of corrected command.}
      \label{sampleproba}
      \end{center}
\vspace{-3.5ex}
\end{figure}

\subsection{Computation Time}
The following computation times are on the formula horizon of 15s (using an HP laptop, Intel i7  and \textit{std::chrono})~\footnote{github.com/Antoine-Bst/STL-Three-Valued-Clause-Filtering/}.

Table~\ref{tab:lp_time} reports the computation time for a single $\mathcal{C}_k$.

\begin{table}[h]
\centering
\caption{Control correction computation for one $\mathcal{C}_k$ clause}
\label{tab:controlcomputation}
\begin{tabular}{l|c|c}
    Formula type & Number of constraints & LP computation time\\
    \hline
     Flat STL & 610 & $27 \pm 3$ ms\\
     Nested STL & 652 & $28 \pm 3$ ms\\
\end{tabular}
\label{tab:lp_time}
\vspace{-1.5ex}
\end{table}

The total correction time scales linearly with the number of clauses and remains practical. Since each clause guarantees satisfaction, dominated clauses under disjunction are pruned using a set-membership scoring function, reducing the computation to a single $\mathcal{C}_k$. Each clause is scored by aggregating predicate membership degrees with a minimum operator, consistent with conjunction semantics. This yields an increase in the average minimum correction across all samples (Fig.~\ref{sampleproba}) from 1.7 to 2.2, highlighting a trade-off between optimality and computation time.

\begin{table}[h]
\centering
\caption{Comparison of constraint identification methods (C++14)}
\label{tab:sat_comparison}
\begin{tabular}{llcc}
\hline
 & & \multicolumn{2}{c}{\textbf{Time}} \\
\textbf{Computation} & \textbf{Type} & \textbf{SAT} & \textbf{Ours} \\
\hline
\multirow{2}{*}{$\psi$ + $\mathrm{DNF}(\psi)$}
    & Flat ($K{=}13$)    & 38\,s            & $\approx 1$\,ms \\
    & Nested ($K{=}208$) & \textsc{t.o.}    & $\approx 1$\,ms \\
\hline
\multirow{2}{*}{$\mathrm{Cst}(\phi, \mathcal{C}_k)$}
    & Flat               & $48{\pm}3$\,ms   & $21{\pm}2$\,ms \\
    & Nested             & $119{\pm}3$\,ms  & $26{\pm}2$\,ms \\
\hline
\multirow{2}{*}{$\mathrm{DNF}(\mathrm{Cst})$}
    & Flat ($K{=}13$)              & \textsc{t.o.}    & $\approx 1$\,ms \\
    & Nested ($K{=}13$)           & \textsc{t.o.}    & $\approx 1$\,ms \\
\hline
\end{tabular}

\vspace{-2.5ex}
\end{table}

The number of clauses $\mathcal{C}_k$ in the DNF is denoted by $K$ in Table~\ref{tab:sat_comparison}.
The up to $26\times$ increase in computation time for propagation in $\mathrm{Cst}(\phi, \mathcal{C}_k)$ compared to the initial propagation and DNF is due to the [0,1] switching step: the number of predicate clauses to propagate and compute linear constraints grows from $40$--$55$ to nearly $700$. 

The implemented SAT Evaluation enumerates minimal satisfying conjunctions by iterative deepening over clause subsets, analogous to All-SAT with minimal subset filtering. Although it has access to the markers identified via~\cite{besset2025cdc} and their expected value, it has no prior on their logical dependencies within the formula. The combinatorial enumeration becomes intractable as the number of markers grows (Timeout $\textsc{t.o.}$ $>1$ min).
For $\mathrm{Cst}(\phi, \mathcal{C}_k)$, constraints are identified by backward testing ordered by robustness, where each constraint is tested for necessity which is fast to obtain. 

\vspace{-1ex}
\section{Conclusion and Discussion}\label{sec:conclusion}

This paper introduces a method for identifying and enforcing sufficient constraints to guarantee STL satisfaction under uncertainty. To our knowledge, no existing method combines tube-based guaranteed satisfaction assessment with corrective control. While computing the globally optimal correction is intractable in practice, our method provides a tractable and scalable alternative with formal guarantees, selecting a logically minimal sufficient clause subset at the cost of suboptimality. As an application, we instantiated this formulation in a control correction setting using zonotopic reachable sets and linear programming, showing how the extracted clauses recover guaranteed satisfaction.

While mandatory for online monitoring, the main bottleneck remains reachability analysis (5–10\,s), whereas the correction layer runs in milliseconds. Real-time deployment on a robotic platform therefore hinges primarily on accelerating this reachability step. Since our framework is independent of the reachability method, such improvements integrate directly.
A promising direction for future work is to investigate alternative approaches for computing control corrections, as iterations due to tight control bounds (Section~\ref{sec:feasibility}) arise frequently.
\vspace{-3ex}

\section*{Acknowledgment}

The authors$^a$ acknowledge support from the CIEDS\footnote{CIEDS: French Interdisciplinary Center for Defense and Security.} with the STARTS project.
\vspace{-1ex}
{\tiny
\bibliographystyle{IEEEtran}
\bibliography{refs}

@article{capdarticle,
title = {CAPD::DynSys: A flexible C++ toolbox for rigorous numerical analysis of dynamical systems},
journal = {Communications in Nonlinear Science and Numerical Simulation},
year = {2021},
author = {Tomasz Kapela and Marian Mrozek and Daniel Wilczak and Piotr Zgliczyński}
}

@inproceedings{maler_monitoring_2004,
    author = {Maler, Oded and Nickovic, Dejan},
    title = {Monitoring Temporal Properties of Continuous Signals},
    booktitle = {Formal Techniques, Modelling and Analysis of Timed and Fault-Tolerant Systems},
    publisher = {Springer},
    year = {2004},
    doi = {10.1007/978-3-540-30206-3\_12}
}

@incollection{jaulin_applied_2001,
	location = {London},
	title = {Applied Interval Analysis},
	booktitle = {Applied Interval Analysis},
	year = {2001},
	publisher = {Springer},
	author = {Jaulin, Luc and Kieffer, Michel and Didrit, Olivier and Walter, Eric},
	date = {2001},
	langid = {english},
}

@article{alexandre_dit_sandretto_validated_2016,
  author       = {Julien {Alexandre~dit~Sandretto} and Alexandre Chapoutot},
  title        = {Validated Explicit and Implicit Runge-Kutta Methods},
  journal      = {Reliable Computing},
  volume       = {22},
  year         = {2016},
}

@inproceedings{Althoff2015ARCH,
	author			= {Matthias Althoff},
	title			= {An Introduction to {CORA} 2015},
	year			= {2015},
	month 			= {December},
	booktitle		= {Proc. of the 1st and 2nd Workshop on Applied Verification for Continuous and Hybrid Systems},
	doi			= {10.29007/zbkv},
  }

@inproceedings{Lercher2024CAV,
  author       = {Florian Lercher and Matthias Althoff},
  title        = {Using Four-Valued Signal Temporal Logic for Incremental Verification of Hybrid Systems},
  booktitle    = {Proc. of Computer Aided Verification (CAV)},
  series       = {Lecture Notes in CS},
  year         = {2024},
  doi          = {10.1007/978-3-031-65633-0\_12},
}

@article{Ishii2016IEICE,
  author       = {Daisuke Ishii and Naoki Yonezaki and Alexandre Goldsztejn},
  title        = {Monitoring Temporal Properties Using Interval Analysis},
  journal      = {IEICE Transactions on Fundamentals of Electronics, Communications and Computer Sciences},
  year         = {2016},
  doi          = {10.1587/transfun.E99.A.442},
}

@inproceedings{Roehm2016ATVA,
  author    = {Hendrik Roehm and Jens Oehlerking and Thomas Heinz and Matthias Althoff},
  title     = {{STL} Model Checking of Continuous and Hybrid Systems},
  booktitle = {Automated Technology for Verification and Analysis (ATVA 2016)},
  year      = {2016},
  doi       = {10.1007/978-3-319-46520-3\_26},
}

@inproceedings{SankarMonteCarlo,
author = {Nghiem, Truong and Sankaranarayanan, Sriram and Fainekos, Georgios and Ivanci\'{c}, Franjo and Gupta, Aarti and Pappas, George J.},
title = {Monte-carlo techniques for falsification of temporal properties of non-linear hybrid systems},
year = {2010},
doi = {10.1145/1755952.1755983},
booktitle = {Proceedings 13th ACM International Conference on Hybrid Systems: Computation and Control}
}

@article{alexandre_dit_sandretto_confidence_2021,
  author       = {Julien {Alexandre~dit~Sandretto}},
  title        = {Confidence-based Contractor, Propagation and Potential Clouds for Differential Equations},
  journal      = {Acta Cybernetica},
  year         = {2021},
  doi          = {10.14232/actacyb.285177}
}

@inproceedings{sadigh2016safe,
  author    = {Dorsa Sadigh and Ashish Kapoor},
  title     = {Safe Control under Uncertainty with Probabilistic Signal Temporal Logic},
  booktitle = {Proceedings of Robotics: Science and Systems (RSS)},
  year      = {2016}
}

@inproceedings{besset2025cdc,
  author    = {Antoine Besset and Joris Tillet and Julien {Alexandre~dit~Sandretto}},
  title     = {Uncertainty Removal in Verification of Nonlinear Systems against Signal Temporal Logic via Incremental Reachability Analysis},
  booktitle = {Proceedings of the 64th IEEE CDC},
  year      = {2025},
  publisher = {IEEE},
}

@inproceedings{Tran2025ProbStarTL,
  author    = {Hoang{-}Dung Tran and Sung Woo Choi and Yuntao Li and Hideki Okamoto and Bardh Hoxha and Georgios Fainekos},
  title     = {ProbStar Temporal Logic for Verifying Complex Behaviors of Learning-enabled Systems},
  booktitle = {Proceedings of the 28th ACM International Conference on Hybrid Systems: Computation and Control},
  year      = {2025},
  doi       = {10.1145/3716863.3718035},
}

@inproceedings{lindemann2023conformal,
  title={Conformal prediction for stl runtime verification},
  author={Lindemann, Lars and Qin, Xin and Deshmukh, Jyotirmoy V and Pappas, George J},
  booktitle={Proceedings ACM/IEEE 14th International Conference on Cyber-Physical Systems},
  year={2023}
}

@inproceedings{Shmarov2015ProbReach,
  title     = {ProbReach: A Tool for Guaranteed Reachability Analysis of Stochastic Hybrid Systems},
  author    = {Shmarov, Fedor and Zuliani, Paolo},
  booktitle = {Symbolic and Numerical Methods for Reachability Analysis (SNR 2015)},
  year      = {2015},
}

@ARTICLE{Farahani2018,
  author={Farahani, Samira S. and Majumdar, Rupak and Prabhu, Vinayak S. and Soudjani, Sadegh},
  journal={IEEE Transactions on Automatic Control}, 
  title={Shrinking Horizon Model Predictive Control With Signal Temporal Logic Constraints Under Stochastic Disturbances}, 
  year={2019},
  doi={10.1109/TAC.2018.2880651}}

@inproceedings{vlahakis2024probabilistic,
  title={Probabilistic tube-based control synthesis of stochastic multi-agent systems under signal temporal logic},
  author={Vlahakis, Eleftherios E and Lindemann, Lars and Sopasakis, Pantelis and Dimarogonas, Dimos V},
  booktitle={2024 IEEE 63rd Conference on Decision and Control (CDC)},
  year={2024},
  organization={IEEE}
}

@ARTICLE{shurmannzonocontrol2021,
  author={Schürmann, Bastian and Althoff, Matthias},
  journal={IEEE Transactions on Automatic Control}, 
  title={Optimizing Sets of Solutions for Controlling Constrained Nonlinear Systems}, 
  year={2021},
  doi={10.1109/TAC.2020.2989762}}

@article{SCHURMANN2017,
title = {Guaranteeing Constraints of Disturbed Nonlinear Systems Using Set-Based Optimal Control in Generator Space},
year = {2017},
note = {20th IFAC World Congress},
issn = {2405-8963},
doi = {https://doi.org/10.1016/j.ifacol.2017.08.1617},
author = {Bastian {Schürmann} and Matthias Althoff},
}

@article{safeautonomy2018,
author = {Jha, Susmit and Raman, Vasumathi and Sadigh, Dorsa and Seshia, Sanjit A.},
title = {Safe Autonomy Under Perception Uncertainty Using Chance-Constrained Temporal Logic},
year = {2018},
issn = {0168-7433},
doi = {10.1007/s10817-017-9413-9},
journal = {J. Autom. Reason.}
}

@article{lindemann2021reactive,
  title={Reactive and risk-aware control for signal temporal logic},
  author={Lindemann, Lars and Pappas, George J and Dimarogonas, Dimos V},
  journal={IEEE Transactions on Automatic Control},
  year={2021},
  publisher={IEEE}
}

@article{majumdar2017funnel,
  author    = {Anirudha Majumdar and Russ Tedrake},
  title     = {Funnel Libraries for Real-Time Robust Feedback Motion Planning},
  journal   = {International Journal of Robotics Research},
  year      = {2017},
  doi       = {10.1177/0278364917712421}
}

@INPROCEEDINGS{dawson2022,
  author={Dawson, Charles and Fan, Chuchu},
  booktitle={2022 IEEE/RSJ International Conference on Intelligent Robots and Systems}, 
  title={Robust Counterexample-guided Optimization for Planning from Differentiable Temporal Logic}, 
  year={2022},
  doi={10.1109/IROS47612.2022.9981382}}

@InProceedings{meng2025telograf,
  title = 	 {{T}e{L}o{G}ra{F}: Temporal Logic Planning via Graph-encoded Flow Matching},
  author =       {Meng, Yue and Fan, Chuchu},
  booktitle = 	 {Proceedings of the 42nd International Conference on Machine Learning},
  year = 	 {2025}
}

@article{Yuonline2024,
author = {Pian Yu and Yulong Gao and Frank J. Jiang and Karl H. Johansson and Dimos V. Dimarogonas},
title ={Online control synthesis for uncertain systems under signal temporal logic specifications},
journal = {International Journal of Robotics Research},
year = {2024},
doi = {10.1177/02783649231212572},
}

@INPROCEEDINGS{verhagen2024,
  author={Verhagen, Joris and Lindemann, Lars and Tumova, Jana},
  booktitle={2024 IEEE 63rd Conference on Decision and Control (CDC)}, 
  title={Robust STL Control Synthesis under Maximal Disturbance Sets}, 
  year={2024},
  doi={10.1109/CDC56724.2024.10886238}}

@inproceedings{ferrereimplicant,
title = "Trace Diagnostics Using Temporal Implicants",
author = "Dejan Nickovic and Thomas Ferr{\`e}re and Oded Maler",
year = "2015",
language = "English",
booktitle = "Automated Technology for Verification and Analysis - 13th International Symposium, ATVA",
}
}

\end{document}